\documentclass[12pt]{iopart}

\usepackage{amsthm,graphicx}
\usepackage{amssymb}
\def\softd{{\leavevmode\setbox1=\hbox{d}%
\hbox to 1.05\wd1{d\kern-0.4ex{\char039}\hss}}}
\def\softt{{\leavevmode\setbox1=\hbox{t}%
\hbox to \wd1{t\kern-0.6ex{\char039}\hss}}}

\newcommand{\R}{\mathbb{R}}
\newcommand{\D}{\mathrm{d}}

\newtheorem{theorem}{Theorem}[section]
\newtheorem{proposition}{Proposition}[section]
\newtheorem{remark}{Remark}[section]


\begin{document}

\title[Spectral estimates for a class of Schr\"odinger operators]
{Spectral estimates for a class of Schr\"odinger operators with infinite phase space and potential unbounded from below}

\author{Pavel Exner and Diana Barseghyan}
\address{Doppler Institute for Mathematical Physics and Applied
Mathematics, \\ B\v{r}ehov\'{a} 7, 11519 Prague, \\ and  Nuclear
Physics Institute ASCR, 25068 \v{R}e\v{z} near Prague, Czechia}
\ead{exner@ujf.cas.cz, dianabar@bk.ru}

\begin{abstract}
We analyze two-dimensional Schr\"odinger operators with the potential $|xy|^p - \lambda (x^2+y^2)^{p/(p+2)}$ where $p\ge 1$ and $\lambda\ge 0$. We show that there is a critical value of $\lambda$ such that the spectrum for $\lambda<\lambda_\mathrm{crit}$ is below bounded and purely discrete, while for $\lambda>\lambda_\mathrm{crit}$ it is unbounded from below. In the subcritical case we prove upper and lower bounds for the eigenvalue sums.
\end{abstract}

\maketitle

\section{Introduction} \label{s: intro}

While the idea of Hermann Weyl to analyze spectra of quantum systems semiclassically, by looking at the phase space allowed for the corresponding classical motion, is one of the most seminal in modern mathematical physics, its validity is not universal. Various examples of systems which have purely discrete spectrum despite the fact the respective phase space volume is infinite were constructed in the last three decades. A classical one belongs to B.~Simon \cite{Si83} and describes a two-dimensional Schr\"odinger operator with the potential $|xy|^p$ having deep ``valleys'' the width of which is shrinking with the distance from the origin. A related problem concerns spectral properties of Dirichlet Laplacians in regions with hyperbolic cusps --- we refer to the recent paper \cite{GW11} for an up-to-date bibliographical survey.

The aim of the present paper is two-fold. First, we want to demonstrate that similar spectral behaviour can occur also for Schr\"odinger operators with potentials unbounded from below. Furthermore, we intend to construct a model which exhibit a nontrivial spectral transition as the coupling constant changes. Specifically, we are going to consider here the following class of operators,
 \begin{equation} \label{operator}
 L_p(\lambda)\,:\; L_p(\lambda)\psi= -\Delta\psi + \left( |xy|^p - \lambda (x^2+y^2)^{p/(p+2)} \right)\psi\,, \quad p\ge 1\,,
 \end{equation}
on $L^2(\R^2)$ using the standard Cartesian coordinate $(x,y)$ in $\R^2$; the parameter $\lambda$ controlling the second term of the potential is non-negative and we will not indicated it if it will be clear from the context. Since $\frac{2p}{p+2}<2$ the above operator is essentially self-adjoint on $C_0^\infty(\R^2)$ by Faris-Lavine theorem -- cf. \cite{RS75}, Thms.~X.28 and  X.38; in the following the symbol $L_p$ or $L_p(\lambda)$ will always mean its closure.

First we will show that there is a critical value of the coupling constant $\lambda$, expressed explicitly as ground-state eigenvalue of the corresponding (an)harmonic oscillator Hamiltonian, such that the spectrum of $L_p(\lambda)$ is below bounded and purely discrete for $\lambda< \lambda_\mathrm{crit}$, while for $\lambda>\lambda_\mathrm{crit}$ it becomes unbounded from below. In the latter case one naturally expects it to be continuous covering the whole real axis but we will not proceed this way. The main result of the paper are upper and lower bounds to the sums of the first $N$ eigenvalues of $L_p(\lambda)$ in the subcritical case proved in Sec.~\ref{s: estimates}.

They give the same asymptotics up to a multiplicative constant if $\lambda=0$, while for $\lambda>0$ this remains true for the leading term but an additional one, linear in $N$, is added in the lower bound. The proof of the upper bound is reduced to the case $\lambda=0$, hence it is not surprising the result holds for any $\lambda< \lambda_\mathrm{crit}$. On the other hand, the argument which yields the lower bound is more subtle and we have been able to prove the result for sufficiently small values of $\lambda$ only leaving room for improvement, see Theorem~\ref{th:lowerbound} below. In the closing section we prove also a lower bound to spectral sums of the operator $-\Delta_D-\lambda(x^2+y^2)$ for $0\le\lambda<1$, where $-\Delta_D$ is the Dirichlet Laplacian on the region with hyperbolic cusps; formally speaking this can be regarded as the limit $p\to\infty$ of the problem (\ref{operator}).

\section{Discreteness of the spectrum}\label{s: discreteness}
\setcounter{equation}{0}

The first important observation is that spectral properties of the operator $L_p(\lambda)$ depend crucially on the value of the parameter $\lambda$; we have to distinguish two cases.

\subsection{The subcritical case}
\label{ss: subcrit}

The spectral regime we are primarily interested in occurs for small values of $\lambda$. To characterize the smallness quantitatively we need an auxiliary operator which will be an (an)harmonic oscillator Hamiltonian on line,
 \begin{equation} \label{oscillator}
 \tilde H_p\,:\: \tilde H_p u = -u''+|t|^p u
 \end{equation}
on $L^2(\R)$ with the standard domain. Let $\gamma_p$ be the minimal eigenvalue of this operator; in view of the potential symmetry we have $\gamma_p = \inf \sigma(H_p)$, where
 \begin{equation} \label{halfoscillator}
 H_p\,:\: H_p u = -u''+t^p u
 \end{equation}
on $L^2(\R_+)$ with Neumann condition at $t=0$. This quantity is well known --- cf.~Fig.~\ref{gammap} --- for $p=2$ is equals one, it reaches its minimum $\gamma_p\approx 0.998995$ at $p\approx 1.788$ and grows to $\frac14 \pi^2$ as $p\to\infty$.
\begin{figure}[t]
\vspace{2em}
\begin{center}
\includegraphics[angle=0,width=0.5\textwidth]{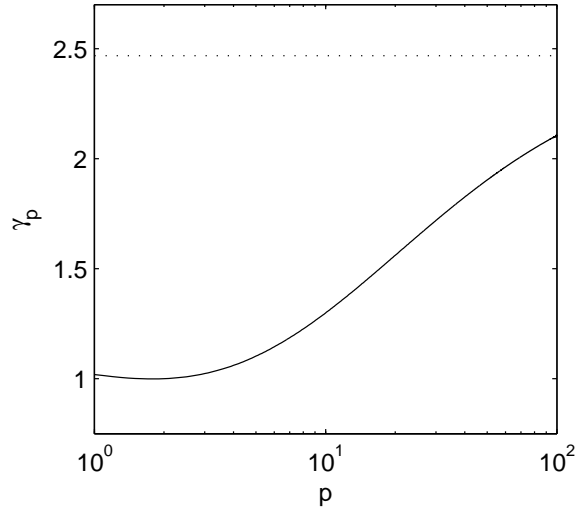} \vspace{0em}
\caption{$\gamma_p$ as a function of $p$ \label{gammap} in semilogarithmic scale.}
\end{center}
\end{figure}
We have the following result.

 \begin{theorem} \label{thm:subcrit}
 For any $\lambda\in [0,\lambda_\mathrm{crit})$, where $\lambda_\mathrm{crit} := \gamma_p$, the operator $L_p(\lambda)$ has a purely discrete spectrum bounded from below for any $p\ge 1$ .
 \end{theorem}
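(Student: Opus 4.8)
The plan is to establish discreteness and semiboundedness by a partition-of-unity (bracketing) argument that reduces the two-dimensional problem, on the regions where it matters, to the one-dimensional oscillator $\tilde H_p$ whose ground-state energy $\gamma_p$ supplies the critical coupling. First I would observe that the potential $|xy|^p$ confines the particle to a neighbourhood of the coordinate axes: outside any set of the form $\{|xy|<C\}$ the potential $|xy|^p-\lambda(x^2+y^2)^{p/(p+2)}$ tends to $+\infty$, so by the standard criterion (Reed--Simon, or the fact that $(L_p+M)^{-1}$ acting through such a region is compact) the only way compactness of the resolvent can fail is through the four ``channels'' along the axes where $|xy|$ stays bounded while $x^2+y^2\to\infty$. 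Thus it suffices to show that, restricted to each such channel, the operator is bounded from below and has compact resolvent.

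Next I would treat one channel, say the one along the positive $x$-axis, introducing a smooth partition of unity $\sum_j \chi_j^2 = 1$ adapted to dyadic-type strips $x\in[r_j, r_{j+1}]$; the IMS localization formula gives $L_p \ge \sum_j \chi_j(L_p - |\nabla\chi_j|^2)\chi_j$, and with the strips chosen so that $r_{j+1}-r_j\to\infty$ the localization error $|\nabla\chi_j|^2$ is uniformly small and may be absorbed. On the $j$-th strip $x\approx r_j$ is large, so in the $y$-direction the operator is essentially $-\partial_y^2 + r_j^p|y|^p$, whose lowest eigenvalue is $r_j^{2p/(p+2)}\gamma_p$ by the scaling $y\mapsto r_j^{-p/(p+2)}y$ applied to $\tilde H_p$. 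Meanwhile in that strip $(x^2+y^2)^{p/(p+2)} \approx r_j^{2p/(p+2)}$ up to lower-order corrections coming from the $y$-extent and the strip width. Hence on the $j$-th strip
\begin{equation}
L_p(\lambda) \ge -\partial_x^2 + \left(\gamma_p - \lambda\right) r_j^{2p/(p+2)} - o\!\left(r_j^{2p/(p+2)}\right),
\end{equation}
and since $\lambda < \gamma_p$ the coefficient $\gamma_p-\lambda$ is strictly positive, so this lower bound tends to $+\infty$ as $j\to\infty$. By a Persson-type argument this forces $\inf\sigma_{\mathrm{ess}}(L_p(\lambda)) = +\infty$, i.e. the spectrum is purely discrete, and the same estimates (now including the finitely many bounded strips near the origin, where the potential is simply bounded below) give semiboundedness.

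The main obstacle is the bookkeeping of the error terms: one must choose the channel width and the strip lengths $r_{j+1}-r_j$ so that simultaneously (i) the IMS error $|\nabla\chi_j|^2$ is negligible against $r_j^{2p/(p+2)}$, (ii) the replacement of $r_j^p|y|^p$ by the true $|xy|^p$ over the strip costs only a lower-order multiple of $r_j^{2p/(p+2)}$, and (iii) the replacement of $(x^2+y^2)^{p/(p+2)}$ by $r_j^{2p/(p+2)}$ likewise costs only lower order — this last point is where the precise exponent $p/(p+2)$, chosen exactly so that the attractive term scales the same way as the transverse oscillator energy, is essential, and where the strict inequality $\lambda<\gamma_p$ (rather than $\le$) is used to keep a definite positive margin. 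A convenient way to avoid delicate cut-offs near the diagonal region $|x|\sim|y|$ is to bound $|xy|^p \ge$ a suitable function that decouples in polar-type coordinates, or alternatively to compare directly with the separable operator $-\Delta + \gamma_p\min(|x|,|y|)^{?}$ type minorant; I would carry out whichever of these makes the three error estimates cleanest.
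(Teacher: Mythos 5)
Your overall strategy --- localize along the axis channels, bound the transverse problem from below by the scaled oscillator $\tilde H_p$, and observe that the attractive term scales with the same power $|x|^{2p/(p+2)}$ so that the margin $\gamma_p-\lambda>0$ drives the effective potential to $+\infty$ --- is sound and genuinely different from the paper's route (the paper uses Neumann bracketing on annuli $n-1\le r\le n$, separates variables in polar coordinates, and reduces to an angular one-dimensional operator which after the scaling $x=t\,n^{-(2p+2)/(p+2)}$ converges to the half-line oscillator (\ref{halfoscillator}), giving the threshold $(\gamma_p-\lambda+o(1))\,n^{(4p+4)/(p+2)}$ on the $n$-th annulus). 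However, your reduction step contains a genuine error: it is \emph{not} true for $\lambda>0$ that the potential tends to $+\infty$ outside a set $\{|xy|<C\}$. Along the hyperbola $|xy|=2C$ (which lies in that complement) one has $|xy|^p-\lambda(x^2+y^2)^{p/(p+2)}\approx (2C)^p-\lambda|x|^{2p/(p+2)}\to-\infty$, so the complement of hyperbolic channels is itself a region where the potential is unbounded below, and the claimed ``standard criterion'' does not apply to it. The decomposition must be chosen differently, e.g.\ fixed-width strips $\{|y|\le a\}$, $\{|x|\le a\}$ together with the corner region $\{|x|\ge a,\ |y|\ge a\}$: in the corner one has $|xy|^p\ge a^p\max(|x|,|y|)^p$ while the attractive term is only $O\bigl(\max(|x|,|y|)^{2p/(p+2)}\bigr)$ with $2p/(p+2)<p$, so there the potential really does diverge, and your worry about the diagonal region disappears.

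Once the channels are fixed-width strips the rest of your plan closes without the delicate bookkeeping you defer in the last paragraph: for a function IMS-localized in $\{|y|\le 2a\}$ one has, by extension by zero in $y$ and pure scaling, the \emph{exact} form bound $-\partial_y^2+|x|^p|y|^p\ge\gamma_p|x|^{2p/(p+2)}$ (no Dirichlet/Neumann comparison or strip-in-$x$ subdivision is needed), while on that support $(x^2+y^2)^{p/(p+2)}\le(x^2+4a^2)^{p/(p+2)}=|x|^{2p/(p+2)}+O(|x|^{2p/(p+2)-2})$ and the $y$-localization error is $O(1)$; hence the channel piece is bounded below by a one-dimensional operator $-\partial_x^2+(\gamma_p-\lambda)|x|^{2p/(p+2)}-O(1)$ with divergent potential, and a Persson-type argument then yields semiboundedness and purely discrete spectrum for all $\lambda<\gamma_p$, exactly as in the paper. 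As submitted, though, the proof is incomplete: the stated justification of the channel decomposition fails for the very case $\lambda>0$ the theorem addresses, and the three error estimates on which you say everything hinges are announced rather than carried out.
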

\begin{proof}[Proof:]
By the minimax principle it is sufficient to estimate $L_p$ from below by a self-adjoint operator with a purely discrete spectrum. To construct such a lower bound we employ bracketing imposing additional Neumann condition at concentric circles of radii $n=1,2,\dots\,$. Using the standard polar coordinates, we obtain thus a direct sum of operators acting as
 \begin{eqnarray}
\lefteqn{L^{(1)}_{n,p}\psi=-\frac{1}{r}\frac{\partial}{\partial r}\biggl(r\frac{\partial\psi}{\partial r}
\biggr)-\frac{1}{n^2}\frac{\partial^2\psi}{\partial\varphi^2}+\biggl(\frac{r^{2p}}{2^p}
|\sin2\varphi|^p-\lambda r^{2p/(p+2)}\biggr)\psi\,,} \\ [.5em] &&
\hspace{4em}\left. \frac{\partial\psi}{\partial n}\right|_{r=n-1}= \left.
\frac{\partial\psi}{\partial n}\right|_{r=n}=0 \,, \label{L_1}
 \end{eqnarray}
on the annular regions $G_n:=\{(r,\varphi)\,:\: n-1\leq r<n,\,\,0\leq\varphi<2\pi\},\, n=1,2,\ldots\:$. Each of the latter is compact and the potential is regular on it, hence $\sigma(L^{(1)}_{n,p})$ is purely discrete. It is sufficient therefore to check that $\inf\sigma(L^{(1)}_{n,p})\to \infty$ as $n\to\infty$, because then the spectrum of $\bigoplus_{n=1}^\infty L^{(1)}_{n,p}$ below any fixed value is a finite union of discrete spectra which implies the sought claim.

The argument can further simplified if we estimate $L^{(1)}_{n,p}$ from below by an operator with separating variables acting, for instance, as
 \begin{eqnarray*}
\lefteqn{L^{(2)}_{n,p}\psi =-\frac{1}{r}\frac{\partial}{\partial r}\biggl(r\frac{\partial\psi}{\partial r}
\biggr)-\frac{1}{n^2}\frac{\partial^2\psi}{\partial\varphi^2}+\biggl(\frac{(n-1)^{2p}}{2^p}
|\sin2\varphi|^p-\lambda n^{2p/(p+2)}\biggr)\psi\,,} \\ [.5em] &&
\hspace{4em}\left. \frac{\partial\psi}{\partial n}\right|_{r=n-1}= \left.
\frac{\partial\psi}{\partial n}\right|_{r=n}=0 \,,
 \end{eqnarray*}
the spectrum of which is the ``sum'' of the radial and angular component. Furthermore, the lowest radial eigenfunction is zero corresponding to a constant eigenfunction, hence the problem reduces to analysis of the angular component of the above operator; using the mirror symmetries of the potential it is enough to consider in on $L^2(0,\pi/4)$ with Neumann conditions at the endpoints of the interval.

To simplify things further we take an arbitrary $\varepsilon \in(0,1)$ and introduce the following ordinary differential operator on $L^2(0,\pi/4)$,
 $$
L^{(3)}_{n,p}\,:\: L^{(3)}_{n,p}u=-u''+\biggl(\frac{n^{2p+2}}{2^p}\sin^p2x -\frac{\lambda}{1-\varepsilon}\, n^{(4p+4)/(p+2)}\biggr)u
 $$
with Neumann boundary conditions, $u'(0)=u'(\pi/4)=0\,$. It is clear that for $n$ large enough, specifically $n > \left( 1- (1-\varepsilon)^{(p+2)/ (4p+4)} \right)^{-1}$, we have $n^2 L^{(2)}_{n,p} \ge L^{(3)}_{n-1,p}$, hence it is sufficient to investigate the spectral threshold $\mu_{n,p}$ of $L^{(3)}_{n,p}$. We employ one more estimate from below. To a fixed $\varepsilon$ there is $\delta(\varepsilon)$ such that $\sin2x\ge2(1 -\varepsilon)x$ holds for $0\le x\le\delta(\varepsilon)$. We consider the operator
$$
L^{(4)}_{n,p}:=-\frac{\D^2}{\D x^2}+
n^{2p+2}x^p\left(\chi_{(0,\delta(\varepsilon)]}(x)+
\left(\frac{2}{\pi}
\right)^p\chi_{[\delta(\varepsilon),\pi/4)}(x)\right) -\lambda_\varepsilon'\,n^{(4p+4)/(p+2)}
$$
with Neumann boundary conditions, where $\lambda_\varepsilon':= \lambda(1-\varepsilon)^{-p-1}$, and denote its lowest inequality as $\mu_{n,p}'$. In view of the inequality $\sin x\ge\frac{2}{\pi}x$ we have $L^{(3)}_{n,p}\ge(1-\varepsilon)^pL^{(4)}_{n,p}$, and thus $\mu_{n,p}\ge(1-\varepsilon)^p\mu_{n,p}'\,,\,n=1,2,\ldots\,$, by the minimax principle. It is therefore sufficient to check that
 \begin{equation} \label{blowup_bound}
\frac{\mu_{n,p}'}{n^2}\rightarrow\infty\qquad\textrm{as}\qquad n\rightarrow\infty\,.
 \end{equation}
It is straightforward to check that the operator $L^{(4)}_{n,p}$ without the last term is by a simple scaling transformation, $x=t\cdot n^{-(2p+2)/(p+2)}$, unitarily equivalent to the $\nu_p^2$ multiple of the operator
 \begin{equation} \label{osc_bound}
H_{n,p}=-\frac{\D^2}{\D t^2}+t^p\left(\chi_{(0,\nu_p
\delta(\varepsilon)]}(t)+\left(\frac{2}{\pi}\right)^p\chi_{[\nu_p
\delta(\varepsilon),\nu_p\pi/4)}(t)\right)
 \end{equation}
on $L^2\left(0,\frac14 \nu_p\pi\right)$ with Neumann conditions at the endpoints of the interval, where $\nu_p:=n^{(2p+2)/(p+2)}$. Spectrum of the above operator is purely discrete, and denoting by $\gamma_{n,p}$ its lowest eigenvalue, we thus have
 $$
\mu'_{n,p}=(\gamma_{n,p}-\lambda'_\varepsilon)\,n^{(4p+4)/(p+2)}\,.
 $$
Furthermore, we have $H_{n,p}^\mathrm{N} \le H_{n,p} \le H_{n,p}^\mathrm{D}$ where the estimating operators act as (\ref{osc_bound}) with additional Neumann and Dirichlet condition, respectively, at the point $t=\nu_p\delta(\varepsilon)$. For large enough $n$ the ground state of both the estimating operators come from the ``inner'' part, $t\in(0,\nu_p\delta(\varepsilon))$, and one can check explicitly that $\gamma_{n,p}^j \to \gamma_p$ as $n\to\infty$ for both $j=\mathrm{N,D}$, where $\gamma_p$ is the ground state of (\ref{halfoscillator}), and therefore $\gamma_{n,p} \to \gamma_p$ as $n\to\infty$. This further implies that to any positive $\varepsilon'$ there is a natural number $n_{\varepsilon'}$ such that
 $$
(\gamma_p-\lambda'_\varepsilon-\varepsilon')\, n^{(4p+4)/(p+2)} \le \mu'_{n,p}
\le (\gamma_p-\lambda'_\varepsilon+\varepsilon')\, n^{(4p+4)/(p+2)}
 $$
holds for for $n>n_{\varepsilon'}$. Since $\varepsilon$ and $\varepsilon'$ are arbitrary we conclude that (\ref{blowup_bound}) holds for any $p\ge 1$ whenever $\lambda < \gamma_p$ which establishes the claim of the theorem. \end{proof}

\subsection{The supercritical case}
\label{ss: supercrit}

For large values of $\lambda$ the picture changes.

 \begin{proposition}
 The spectrum of $L_p(\lambda),\: p\ge 1\,$, is below unbounded if $\lambda > \lambda_\mathrm{crit}$.
 \end{proposition}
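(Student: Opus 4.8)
The plan is to prove $\inf\sigma(L_p(\lambda))=-\infty$ by producing, for each large $R$, a unit trial function $\psi_R$ supported far from the origin in a thin strip along one coordinate semiaxis whose Rayleigh quotient tends to $-\infty$; by the minimax principle this is enough. The heuristic is the following. In a neighbourhood of the positive $x$-semiaxis the potential is $|x|^p|y|^p-\lambda(x^2+y^2)^{p/(p+2)}$, which for $|y|$ small is close to $|x|^p|y|^p-\lambda|x|^{2p/(p+2)}$. Freezing $x$, the transverse Schr\"odinger operator $-\partial_y^2+|x|^p|y|^p$ on $L^2(\R_y)$ is, via the scaling $y=|x|^{-p/(p+2)}t$, unitarily equivalent to $|x|^{2p/(p+2)}\tilde H_p$, hence its ground state energy equals exactly $\gamma_p|x|^{2p/(p+2)}$. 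Thus, if we put into the transverse direction the corresponding ground state, the longitudinal effective potential becomes $\bigl(\gamma_p-\lambda\bigr)|x|^{2p/(p+2)}$, which is driven to $-\infty$ as $|x|\to\infty$ precisely when $\lambda>\gamma_p=\lambda_\mathrm{crit}$.

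To implement this, let $v\in L^2(\R)$ be the normalized ground state of $\tilde H_p$, so that $\int_\R\bigl(|v'|^2+|t|^p|v|^2\bigr)\,\D t=\gamma_p$, and fix $\chi\in C_0^\infty(0,1)$ with $\|\chi\|_{L^2(\R)}=1$. For $R>0$ set
$$
\psi_R(x,y):=\chi(x-R)\,R^{p/(2(p+2))}\,v\bigl(R^{p/(p+2)}y\bigr).
$$
A direct computation gives $\|\psi_R\|^2=1$ and $\int_{\R^2}|\partial_x\psi_R|^2=\|\chi'\|_{L^2}^2=O(1)$. Using the scaling of $v$ together with the fact that $|x|^p=R^p\bigl(1+O(1/R)\bigr)$ on $[R,R+1]$ one obtains
$$
\int_{\R^2}|\partial_y\psi_R|^2+\int_{\R^2}|xy|^p|\psi_R|^2=R^{2p/(p+2)}\bigl(\gamma_p+O(1/R)\bigr),
$$
while, since $(x^2+y^2)^{p/(p+2)}\ge|x|^{2p/(p+2)}\ge R^{2p/(p+2)}$ throughout $\mathrm{supp}\,\psi_R$, the negative term of the potential contributes at most $-\lambda R^{2p/(p+2)}$. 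Adding up, the Rayleigh quotient of $\psi_R$ is bounded above by
$$
R^{2p/(p+2)}\Bigl(\gamma_p-\lambda+O(1/R)+O\bigl(R^{-2p/(p+2)}\bigr)\Bigr),
$$
and since $\gamma_p-\lambda<0$ and $2p/(p+2)>0$ this tends to $-\infty$ as $R\to\infty$, which yields the claim.

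I do not expect a genuine obstacle here; the points to be careful about are routine. One must check admissibility: $\psi_R$ is compactly supported and lies in $H^1(\R^2)$, its transverse factor belonging to $H^1(\R)$ because $\int(|v'|^2+|t|^p|v|^2)=\gamma_p<\infty$, and since the potential is bounded on the compact support of $\psi_R$ the function lies in the form domain of $L_p(\lambda)$, so the minimax principle applies. The only real choice made above is to take the $x$-window of fixed width one, so that $|x|^p$ is essentially constant across it: this keeps the transverse energy equal to $\gamma_p R^{2p/(p+2)}$ up to a relative $O(1/R)$ error while keeping the longitudinal kinetic cost $\|\chi'\|^2$ merely $O(1)$, negligible against $R^{2p/(p+2)}$. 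Equivalently one could let the transverse profile depend on $x$, replacing the prefactor by $|x|^{p/(2(p+2))}v(|x|^{p/(p+2)}y)$; then the main terms are exactly $\gamma_p\int|\chi(x-R)|^2|x|^{2p/(p+2)}\,\D x$, but one must in addition control the cross term arising from $\partial_x$ hitting the profile, whose $L^2$-norm is $O\bigl(R^{p/(p+2)-1}\bigr)=o\bigl(R^{p/(p+2)}\bigr)$ and hence harmless.
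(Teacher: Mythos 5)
Your argument is correct in substance, but it takes a genuinely different route from the paper. The paper proceeds by Dirichlet bracketing: it cuts $\R^2$ into annuli $n-1\le r\le n$, estimates each piece from above by an operator with separated radial and angular variables, and after a scaling identifies the angular spectral threshold with that of the (an)harmonic oscillator, obtaining $\inf\sigma\simeq(\gamma_p-\lambda)\,n^{(4p+4)/(p+2)}\to-\infty$. You instead build an explicit trial function localized in a unit window $[R,R+1]$ along the $x$-axis with the transverse oscillator ground state scaled by $|x|^{p/(p+2)}\approx R^{p/(p+2)}$, and show its Rayleigh quotient is $\le R^{2p/(p+2)}(\gamma_p-\lambda+O(1/R))+O(1)$. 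Both proofs hinge on the same scaling that produces the threshold $\gamma_p$; your variational construction is more elementary and more transparent about \emph{where} in configuration space the instability sits (the valleys along the axes), while the bracketing argument of the paper has the advantage of treating the sub- and supercritical cases by one and the same machinery (the Neumann version of the very same decomposition gives Theorem~\ref{thm:subcrit}).

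One small inaccuracy in your admissibility discussion: $\psi_R$ is \emph{not} compactly supported, since the ground state $v$ of $\tilde H_p$ is strictly positive on all of $\R$, so the phrase ``the potential is bounded on the compact support of $\psi_R$'' does not apply. This is harmless: either note that $v$ and $v'$ decay super-exponentially while the potential grows only polynomially on the strip, so $\psi_R$ and $(-\Delta+V)\psi_R$ lie in $L^2$ and hence $\psi_R$ belongs to the domain of the (essentially self-adjoint) closure, where $\langle\psi_R,L_p(\lambda)\psi_R\rangle\ge\inf\sigma(L_p(\lambda))$ gives the conclusion; or, even simpler, replace $v$ by a $C_0^\infty(\R)$ function whose one-dimensional Rayleigh quotient lies strictly between $\gamma_p$ and $\lambda$, which makes the whole trial function a legitimate $C_0^\infty(\R^2)$ vector and leaves the rest of your estimate unchanged.
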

\begin{proof}[Proof:] We employ a similar technique, this time looking for an upper estimate to $L_p(\lambda)$. We construct it by Dirichlet bracketing considering the operators acting as (\ref{L_1}) on the annular domains $G_n$, denoted again as $L^{(1)}_{n,p}$, this time with Dirichlet boundary conditions, $\left. \psi\right|_{\partial G_n}=0$. We have $\bigoplus_{n=1}^\infty L^{(1)}_{n,p} \ge L_p(\lambda)$, hence by minimax principle we have to prove that
 $$
\inf\, \sigma\!\left( L^{(1)}_{n,p} \right) \to -\infty \qquad \mathrm{as} \qquad n\to\infty\,.
 $$
We estimate $L^{(1)}_{n,p}$ from above by an operator with separated variables acting as
 $$
L^{(2)}_{n,p}\psi=-\frac{1}{r}\frac{\partial}{\partial r}\biggl(r\frac{\partial\psi}{\partial r}
\biggr)-\frac{1}{(n-1)^2}\frac{\partial^2\psi}{\partial\varphi^2}+
\biggl(\frac{n^{2p}}{2^p}|\sin2\varphi|^p-\lambda\,(n-1)^{2p/(p+2)}\biggr)\psi
 $$
on $H^2$ functions satisfying $\psi(n-1,\varphi)= \psi(n,\varphi) =0$ for all $\varphi \in [0,2\pi)$. The contribution from the radial term is now nonzero, and moreover, it depends on $n$, however, it is uniformly bounded. Specifically, the spectral threshold of $-\frac{1}{r}\frac{\partial}{\partial r}r\frac{\partial}{\partial r}$ on $L^2(n-1,n)$ with Dirichlet condition does not exceed $\pi^2$, the bound being saturated as $n\to\infty\:$ \cite{EFK04}. Consequently, it is sufficient to check that the spectral threshold of the angular part, or of a suitable one-dimensional operator estimating it from above tends to $-\infty$ as $n\to\infty$.

The argument is similar to the one used in the previous proof. We fix $\varepsilon\in(0,1)$ and analyze the operator acting as
 $$
L^{(2)}_{n,p}u=-u''(\varphi)+\biggl(\frac{n^{2p+2}}{2^p}
\sin^p2\varphi-(1-\varepsilon)\lambda\,n^{(4p+4)/(p+2)}\biggr)u
 $$
on $L^2(0,\pi/4)$ with Neumann conditions at the endpoints of the interval. Using the inequality $\sin x\le x$ on $[0,\pi/4]$ and the unitary equivalence given by the same scaling transformation as in the previous case we reduce the problem to investigation of the operator $\nu_p^2 H_{n,p}$ where $H_{n,p}u=-u''+t^pu$ on $L^2\left(0,\frac14 \nu_p\pi\right)$ with Neumann conditions at the endpoints. Denoting $\mu_{n,p}:= \inf \sigma \left(L^{(2)}_{n,p} \right)$ we find in the same way as above that to any $\varepsilon'>0$ there is a natural $n_{\varepsilon'}$ such that
 $$
(\gamma_p-(1-\varepsilon)\lambda-\varepsilon')n^{(4p+4)/(p+2)}\le\mu_{n,p}
\le(\gamma_p-(1-\varepsilon)\lambda+\varepsilon')n^{(4p+4)/(p+2)}
 $$
holds for all $n>n_{\varepsilon'}$. Since $\lambda>\gamma_p$ by assumption and $\varepsilon,\, \varepsilon'$ are arbitrary, the second inequality yields the desired result.
\end{proof}

\section{Spectral estimates}\label{s: estimates}
\setcounter{equation}{0}

Now we can pass to our main subject which is estimating eigenvalue sums of the operator (\ref{operator}) for small values of the coupling constant $\lambda$.

\subsection{Lower bounds to eigenvalue sums}
\label{ss: lower}

To state our result on lower bound on the spectrum we introduce the following quantity,
 $$
\alpha:= \frac{1}{40}\left(5+\sqrt{105}\right)^2 \approx 5.81\:.
 $$
It is clear from Fig.~\ref{gammap} that $\alpha^{-1}<\gamma_p$.
We denote by $\{\lambda_{j,p}\}_{j=1}^\infty$ the eigenvalues of $L_p(\lambda)$ arranged in the ascending order; then we can make the following claim.

 \begin{theorem} \label{th:lowerbound}
 To any nonnegative $\lambda<\alpha^{-1}\approx 0.172$ there exist a positive constant $C_p$ depending on $p$ only such that the following estimate is valid,
 \begin{equation} \label{lowerbound}
\sum_{j=1}^N\lambda_{j,p}\geq C_p(1-\alpha\lambda)\frac{N^{(2p+1)/(p+1)}}
{(\ln^pN+1)^{1/(p+1)}}-c\lambda\,N,\quad N=1,2,\ldots, 
 \end{equation}
where $c=2\left(\frac{\alpha^2}{5}+1\right)\approx 15.51$.
 \end{theorem}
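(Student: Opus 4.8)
The plan is to reduce the assertion for $\lambda>0$ to the case $\lambda=0$, and to prove the latter by bracketing. The reduction rests on the form inequality
$$
(x^2+y^2)^{p/(p+2)}\ \le\ |x|^{2p/(p+2)}+|y|^{2p/(p+2)}\ \le\ \frac{2}{\gamma_p}\bigl(-\Delta+|xy|^p\bigr)\,,
$$
the first step being subadditivity of $t\mapsto t^{p/(p+2)}$, the second holding fibrewise: for fixed $x\neq0$ the scaling $y=|x|^{-p/(p+2)}t$ makes $-\partial_y^2+|x|^p|y|^p$ unitarily equivalent to $|x|^{2p/(p+2)}\tilde H_p$, whence $|x|^{2p/(p+2)}\le\gamma_p^{-1}(-\partial_y^2+|xy|^p)\le\gamma_p^{-1}L_p(0)$ as forms, and symmetrically in $y$. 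Thus $L_p(\lambda)\ge(1-2\gamma_p^{-1}\lambda)L_p(0)$, and since $L_p(0)\ge0$ has purely discrete spectrum (Theorem~\ref{thm:subcrit}) the minimax principle gives $\lambda_{j,p}\ge(1-2\gamma_p^{-1}\lambda)\lambda_{j,p}(0)$ for every $j$ as soon as $\lambda<\frac12\gamma_p$. Since $2\gamma_p^{-1}<\alpha$ for all $p\ge1$ (recall $\gamma_p>0.99$, cf.\ Fig.~\ref{gammap}) and $c\lambda N\ge0$, the theorem will then follow — in fact with the larger factor $1-2\gamma_p^{-1}\lambda$, without the linear term, and on the wider range $\lambda<\frac12\gamma_p$ — once we show $\sum_{j=1}^N\lambda_{j,p}(0)\ge C_p\,N^{(2p+1)/(p+1)}(\ln^pN+1)^{-1/(p+1)}$.

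For the latter I would impose additional Neumann conditions on the lines $|x|=j$, $j\in\{1,2,\dots\}$, so that $L_p(0)$ is bounded below by the orthogonal sum of its Neumann restrictions to the central strip $S_0=\{|x|<1\}$ and to the strips $S_j^{\pm}=\{j\le|x|<j+1\}$, $j\ge1$. On $S_j^{\pm}$ one has $|x|^p\ge j^p$, so this restriction dominates $(-\partial_x^2)\oplus(-\partial_y^2+j^p|y|^p)$, a separated operator with eigenvalues $\pi^2m^2+\gamma_p^{(k)}j^{2p/(p+2)}$, $m,k\ge0$, where $\gamma_p^{(0)}\le\gamma_p^{(1)}\le\cdots$ are the eigenvalues of $\tilde H_p$, so $\gamma_p^{(0)}=\gamma_p$ and $\gamma_p^{(k)}\asymp(k+1)^{2p/(p+2)}$; the eigenvalue control needed here is uniform in $j$ and of the same truncated-oscillator type as in the proof of Theorem~\ref{thm:subcrit}. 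On $S_0$, fibering in $y$ and bounding the transverse ground state from below gives $S_0\ge-\partial_y^2+w_p(y)$ on $L^2(\R)$ with $w_p(y)\ge c_p\min\{|y|^p,|y|^{2p/(p+2)}\}$, a one-dimensional Schr\"odinger operator with confining potential whose counting function is $O(E^{(p+1)/p})$, hence of lower order than what the strips contribute, so $S_0$ can be discarded.

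It then remains to estimate the counting function of the comparison operator, whose eigenvalues are, up to $p$-dependent constants, $\pi^2m^2+c_p(kj)^{2p/(p+2)}$ with $m\ge0$, $k\ge1$, $j\ge1$ (two copies, for $\pm$). Counting first the $m$ for fixed $(k,j)$ and then the pairs $(k,j)$,
$$
\mathcal N(E)\ \asymp\ \sum_{kj\le(E/c_p)^{(p+2)/(2p)}}\sqrt{\bigl(E-c_p(kj)^{2p/(p+2)}\bigr)_+}\ \asymp\ E^{(p+1)/p}\ln E\,,
$$
the logarithm coming from $\#\{(k,j):kj\le t\}\asymp t\ln t$ and the exponent from $\frac12+\frac{p+2}{2p}=\frac{p+1}{p}$. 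Hence $\mathcal N(E)<N$ for $E\lesssim(N/\ln N)^{p/(p+1)}$, and from $\sum_{j=1}^N\lambda_j=\int_0^\infty(N-\mathcal N(E))_+\,\D E$ one gets $\sum_{j=1}^N\lambda_{j,p}(0)\gtrsim N^{(2p+1)/(p+1)}(\ln N)^{-p/(p+1)}$, which after adjusting $C_p$ to cover small $N$ is of the claimed form.

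The main obstacle is this last step: extracting the correct power $(p+1)/p$ together with the logarithm in $\mathcal N(E)$, and keeping the truncated-oscillator errors uniform in the strip index $j$ so that they do not accumulate over the $\asymp E^{(p+2)/(2p)}$ relevant strips. Finally, if one prefers to avoid the reduction and carry the negative potential through the bracketing directly, the allocation of $-\Delta$ and of $|xy|^p$ between a confining and a leftover part enters as free parameters; the demand that the surviving confinement still dominate $\lambda(x^2+y^2)^{p/(p+2)}$ in both variables becomes a quadratic whose relevant root is $\alpha=\frac14(13+\sqrt{105})$, while the uncontrolled remainder, bounded below by a multiple of $-\lambda$ on each cell, yields the term $-c\lambda N$.
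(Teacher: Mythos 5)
Your reduction step is sound and in fact sharper than what the paper does: the fibrewise scaling argument really gives $\langle\psi,(-\Delta+|xy|^p)\psi\rangle\ge\gamma_p\int|x|^{2p/(p+2)}\psi^2$ and its mirror in $y$, hence $L_p(\lambda)\ge(1-2\gamma_p^{-1}\lambda)L_p(0)$, which (since $2\gamma_p^{-1}<\alpha$) would improve both the constant and the admissible range and dispense with the $-c\lambda N$ term. (The paper instead proves the weighted estimate eigenfunction by eigenfunction, which is where $\alpha$ and $c$ originate, and then obtains the logarithmic factor not from bracketing but from a Lieb--Thirring-type inequality with $\ln^pN$, Theorem~2 of \cite{Bar09}, combined with H\"older and a Legendre transform.) So everything hinges on your lower bound for $\lambda=0$, and that is where the argument breaks.

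The gap is the central strip $S_0=\{|x|<1\}$. The Neumann restriction to $S_0$ acts on $L^2((-1,1)\times\R)$, so "bounding the transverse ground state from below" does not yield $-\partial_y^2+w_p(y)$ on $L^2(\R)$: if you discard $-\partial_x^2$ entirely the comparison operator is $1\otimes(-\partial_y^2+w_p)$, whose eigenvalues have infinite multiplicity and whose counting function is infinite; if instead you retain (a fraction of) $-\partial_x^2$, you get a separated operator with eigenvalues of the form $\mathrm{const}\,m^2+\nu_k$, $\nu_k$ the eigenvalues of $-\partial_y^2+w_p$, and its counting function is $\asymp E^{1/2}\cdot E^{(p+1)/p}=E^{(3p+2)/(2p)}$ --- of \emph{higher} order than the $E^{(p+1)/p}\ln E$ produced by the outer strips, not lower. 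Since Neumann bracketing pools the eigenvalues of all strips, this term dominates $\mathcal N(E)$, and feeding it into $\sum_{j\le N}\lambda_j=\int_0^\infty(N-\mathcal N(E))_+\,\D E$ only yields an exponent $(5p+2)/(3p+2)$, strictly smaller than the claimed $(2p+1)/(p+1)$. The point you are missing is that inside the valley one must keep the whole tower of transverse thresholds: on the $k$-th transverse level the effective potential is $\asymp\gamma_p^{(k)}|y|^{2p/(p+2)}$ with $\gamma_p^{(k)}\asymp(k+1)^{2p/(p+2)}$, so the $k$-th one-dimensional count is $\asymp E^{(p+1)/p}(k+1)^{-1}$ and the harmonic-series summation reproduces $E^{(p+1)/p}\ln E$; replacing all levels by the ground-state potential loses exactly this $1/(k+1)$ decay. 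In particular the valley contribution is of the \emph{same} order as the outer strips, so it cannot be "discarded" even with a correct bound. Until this Simon-type analysis of the valley is carried out (or a known two-sided asymptotics $N(E)\asymp E^{(p+1)/p}\ln E$ for $-\Delta+|xy|^p$ is invoked and inverted, in the spirit of the paper's proof of Theorem~\ref{th:upperbound}), your argument does not reach the stated bound.
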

\begin{proof}[Proof:]
We denote by $\{\psi_{j,p}\}_{j=1}^\infty$ the system of normalized eigenfunctions corresponding to $\{\lambda_{j,p} \}_{j=1}^\infty$, i.e. we have
$$
-\Delta\psi_{j,p}+(|xy|^p-\lambda(x^2+y^2)^{p)/(p+2)})\psi_{j,p}=
\lambda_{j,p}\psi_{j,p},\quad j=1,2,\ldots\:;
$$
without loss of generality we may assume that the functions $\psi_{j,p}$ are real-valued. Our potential form hyperbolic-shaped ``valleys'' and our first task will be to find estimates on eigenfunction integrals in some corresponding regions. Specifically, we are going to demonstrate that for any natural number $j$ and a positive $\delta$ one has
 \begin{eqnarray}
\lefteqn{\hspace{-6em}\int_1^\infty\int_0^{(1+\delta)y^{-p/(p+2)}} \hspace{-1em} y^{2p/(p+2)}\psi_{j,p}^2(x,y)\,\D x\,\D y \le
\frac{5}{2}(1+\delta)^2\int_1^\infty\int_0^\infty\left(\frac{\partial\psi_{j,p}}{\partial x}\right)^2\!\!(x,y)
\,\D x\,\D y } \label{1st_est} \\[.5em] && 
+2\frac{1+\delta}{\delta}\int_1^\infty\int_0^{(1+\delta) y^{-p/(p+2)}}x^py^p\psi_{j,p}^2(x,y)\,\D x\,\D y \nonumber
 \end{eqnarray}
and that for an arbitrary $\varepsilon>0$ there is a number $1\le\theta(\varepsilon)\le1+\delta$
such that
 \begin{equation} \label{2nd_est}
\hspace{-5.5em} \int_1^\infty y^{p/(p+2)}\psi_{j,p}^2\left(\frac{\theta(\varepsilon)}{y^{p/(p+2)}},y\right)
\D y <\frac{1}{\delta}\int_1^\infty\int_{y^{-p/(p+2)}}^{(1+\delta)y^{-p/(p+2)}}
x^py^p\psi_{j,p}^2(x,y)\,\D x\,\D y+\varepsilon\,. 
 \end{equation}
Changing variables in the integral on the right-hand side we get
\begin{eqnarray*}
\lefteqn{ \int_1^\infty\int_{y^{-p/(p+2)}}^{(1+\delta)y^{-p/(p+2)}}
x^py^p\psi_{j,p}^2(x,y)\,\D x\,\D y} \\ && =\int_1^\infty \int_1^{1+\delta}\frac{y^p\,z^p}{y^{p^2/(p+2)} \cdot y^{p/(p+2)}}\, \psi_{j,p}^2\left(\frac{z}{y^{p/(p+2)}},y\right)\, \D z\,\D y
\\ &&
=\int_1^\infty\int_1^{1+\delta}y^{p/(p+2)}\,z^p\, \psi_{j,p}^2\left(\frac{z}{y^{p/(p+2)}},
y\right)\,\D z\,\D y \\ && \ge\int_1^\infty\int_1^{1+\delta}y^{p/(p+2)}\psi_{j,p}^2\left(\frac{z}
{y^{p/(p+2)}},y\right)\,\D z\,\D y
\\ &&
=\int_1^{1+\delta}\int_1^\infty y^{p/(p+2)}\psi_{j,p}^2\left(\frac{z}{y^{p/(p+2)}},y\right)
\,\D y\,\D z  \\ && \ge\delta\inf_{1\le z\le1+\delta}\left\{\int_1^\infty y^{p/(p+2)}\psi_{j,p}^2
\left(\frac{z}{y^{p/(p+2)}},y\right)\,\D y\right\},
\end{eqnarray*}
which proves the validity of inequality (\ref{2nd_est}). Let us proceed to the proof of the inequality (\ref{1st_est}). We fix
a positive $\varepsilon$ and the corresponding number $\theta(\varepsilon)$. In view of Newton-Leibnitz theorem and Cauchy's inequality we have
\begin{eqnarray*}
\hspace{-6em}\lefteqn{\int_1^\infty\int_0^{(1+\delta)y^{-p/(p+2)}}
y^{2p/(p+2)}\psi_{j,p}^2(x,y)\,\D x\,\D y=
\int_1^\infty\int_0^{\theta(\varepsilon)y^{-p/(p+2)}}y^{2p/(p+2)}\psi_{j,p}^2(x,y)
\,\D x\,\D y}
\\ &&
+\int_1^\infty\int_{\theta(\varepsilon)y^{-p/(p+2)}}^{(1+\delta)y^{-p/(p+2)}}
y^{2p/(p+2)}\psi_{j,p}^2(x,y)\,\D x\,\D y
\\ && \hspace{-6.5em}
=\int_1^\infty\!\int_0^{\theta(\varepsilon)y^{-p/(p+2)}} \hspace{-1em} y^{2p/(p+2)}
\left(-\int_x^{\theta(\varepsilon)y^{-p/(p+2)}}
\!\frac{\partial\psi_{j,p}}{\partial t}(t,y)\,\D t+\psi_{j,p}\left(\theta(\varepsilon)
y^{-p/(p+2)},y\right)\right)^2\D x\,\D y
\\ && \hspace{-6.5em}
+\int_1^\infty\!\int_{\theta(\varepsilon)y^{-p/(p+2)}}^{(1+\delta)y^{-p/(p+2)}}y^{2p/(p+2)}
\left(\int_{\theta(\varepsilon)y^{-p/(p+2)}}^x
\!\frac{\partial\psi_{j,p}}{\partial t}(t,y)\,\D t+\psi_{j,p}\left(\theta(\varepsilon)
y^{-p/(p+2)},y\right)\right)^2\D x\,\D y
\\ && \hspace{-6.5em}
\le2\int_1^\infty\int_0^{\theta(\varepsilon)y^{-p/(p+2)}}
y^{2p/(p+2)}\left(\int_x^{\theta(\varepsilon)
y^{-p/(p+2)}}\frac{\partial\psi_{j,p}}{\partial t}(t,y)\,\D t\right)^2\,\D x\,\D y
\\ && \hspace{-5em}
+2\int_1^\infty\int_0^{\theta(\varepsilon)y^{-p/(p+2)}}y^{2p/(p+2)}
\psi_{j,p}^2\left(\theta(\varepsilon)y^{-p/(p+2)},y\right)\,\D x\,\D y
\\ && \hspace{-5em}
+2\int_1^\infty\int_{\theta(\varepsilon)y^{-p/(p+2)}}^{(1+\delta)y^{-p/(p+2)}}
y^{2p/(p+2)}\left(\int_{\theta(\varepsilon)
y^{-p/(p+2)}}^x\frac{\partial\psi_{j,p}}{\partial t}(t,y)\,\D t\right)^2\,\D x\,\D y
\\ && \hspace{-5em}
+2\int_1^\infty\int_{\theta(\varepsilon)y^{-p/(p+2)}}^{(1+\delta)y^{-p/(p+2)}}
y^{2p/(p+2)}\psi_{j,p}^2\left(\theta(\varepsilon)y^{-p/(p+2)},y\right)\,\D x\,\D y
\\ && \hspace{-6.5em}
\le2\theta(\varepsilon)\int_1^\infty\int_0^{\theta(\varepsilon)y^{-p/(p+2)}}
y^{p/(p+2)}\int_x^{\theta(\varepsilon)y^{-p/(p+2)}}\left(\frac{\partial\psi_{j,p}}
{\partial t}\right)^2(t,y)\,\D t\,\D x\,\D y
\\ && \hspace{-5em}
+2\theta(\varepsilon)\int_1^\infty y^{p/(p+2)}\psi_{j,p}^2\left(\theta(\varepsilon)
y^{-p/(p+2)},y\right)\,\D y
\\ && \hspace{-5em}
+2(1+\delta-\theta(\varepsilon))\int_1^\infty\int_{\theta(\varepsilon)y^{-p/(p+2)}}^{(1+\delta)
y^{-p/(p+2)}}y^{p/(p+2)}\int_{\theta(\varepsilon)y^{-p/(p+2)}}^x
\left(\frac{\partial\psi_{j,p}}{\partial t}\right)^2(t,y)\,\D t\,\D x\,\D y
\\ && \hspace{-5em}
+2(1+\delta-\theta(\varepsilon))\int_1^\infty y^{p/(p+2)}\psi_{j,p}^2\left(\theta(\varepsilon)
y^{-p/(p+2)},y\right)\,\D y
\\ && \hspace{-6.5em}
\le2\left(\theta(\varepsilon)\right)^2\int_1^\infty\int_0^{\theta(\varepsilon)y^{-p/(p+2)}}
\left(\frac{\partial\psi_{j,p}}{\partial t}\right)^2(t,y)\,\D t\,\D y
\\ && \hspace{-5em}
+2\theta(\varepsilon)\int_1^\infty y^{\frac{p}{p+2}}\psi_{j,p}^2\left(\theta(\varepsilon)
y^{-p/(p+2)},y\right)\,\D y
\\ && \hspace{-5em}
+2\left(1+\delta-\theta(\varepsilon)\right)^2\int_1^\infty
\int_{\theta(\varepsilon)y^{-p/(p+2)}}^{(1+\delta)y^{-p/(p+2)}}
\left(\frac{\partial\psi_{j,p}}{\partial t}\right)^2(t,y)\,\D t\,\D y
\\ && \hspace{-5em}
+2(1+\delta-\theta(\varepsilon))\int_1^\infty y^{p/(p+2)}\psi_{j,p}^2\left(\theta(\varepsilon)
y^{-p/(p+2)},y\right)\,\D y
\\ && \hspace{-6.5em}
\le\frac{5}{2}(1\!+\!\delta)^2 \int_1^\infty\!\int_0^\infty\left(\frac{\partial\psi_{j,p}}{\partial t}
\right)^2(t,y)\,\D t\,\D y+2(1\!+\!\delta)\int_1^\infty y^{p/(p+2)}\psi_{j,p}^2\left(\theta(\varepsilon)
y^{-p/(p+2)},y\right)\,\D y\,.
\end{eqnarray*}
Furthermore, by virtue of inequality (\ref{2nd_est}) we infer that
\begin{eqnarray*}
\hspace{-6em} \lefteqn{\int_1^\infty\int_0^{(1+\delta)y^{-p/(p+2)}} \hspace{-1em} y^{2p/(p+2)} \psi_{j,p}^2(x,y)\,\D x\,\D y\le\frac{5}{2}(1+\delta)^2\int_1^\infty\int_0^\infty
\left(\frac{\partial\psi_{j,p}}{\partial x}\right)^2(x,y)\,\D x\,\D y}
\\[.5em] &&
+2\frac{1+\delta}{\delta}\int_1^\infty\int_0^{(1+\delta)y^{-p/(p+2)}}
x^py^p\psi_{j,p}^2(x,y)\,\D x\,\D y+2(1+\delta)\varepsilon\,,
\end{eqnarray*}
from which, using the arbitrariness of $\varepsilon$, the validity of inequality (\ref{1st_est}) follows. In the same way one proves
 \begin{eqnarray*}
\lefteqn{\hspace{-6em}\int_1^\infty\int_0^{(1+\delta)x^{-p/(p+2)}} \hspace{-1em} x^{2p/(p+2)}\psi_{j,p}^2(x,y)\,\D y\,\D x \le
\frac{5}{2}(1+\delta)^2\int_1^\infty\int_0^\infty\left(\frac{\partial\psi_{j,p}}{\partial y}\right)^2\!\!(x,y)
\,\D y\,\D x } \\[.5em] &&
+2\frac{1+\delta}{\delta}\int_1^\infty\int_0^{(1+\delta) x^{-p/(p+2)}}x^py^p\psi_{j,p}^2(x,y)\,\D y\,\D x
 \end{eqnarray*}
and analogous bounds for the other hyperbolic ``valleys'' where $x$ or $y$ take negative values. Using the fact that $\|\psi_{j,p}\|=1$ in combination with simple estimates we get
 \begin{eqnarray*}
\lefteqn{\int_{\mathbb{R}^2}(x^2+y^2)^{\frac{p}{p+2}}\psi_{j,p}^2(x,y)\,\D x\,\D y} \\ &&
\le\Biggl(\int_{|y|\ge1}\int_{|x|\le(1+\delta)|y|^{-p/(p+2)}}
|y|^{2p/(p+2)}\psi_{j,p}^2(x,y)\,\D x\,\D y
 \\ &&
\qquad +\int_{|y|\ge1}
\int_{|x|>(1+\delta)|y|^{-p/(p+2)}}|y|^{2p/(p+2)}\psi_{j,p}^2(x,y)\,\D x\,\D y
 \\ &&
\qquad +\int_{|x|\ge1}\int_{|y|\le(1+\delta)|x|^{-p/(p+2)}}
|x|^{2p/(p+2)}\psi_{j,p}^2(x,y)\,\D y\,\D x
 \\ &&
\qquad +\int_{|x|\ge1}\int_{(1+\delta)|x|^{-p/(p+2)<|y|<1}}
|x|^{2p/(p+2)}\psi_{j,p}^2(x,y)\,\D y\,\D x\Biggr)+2
 \\ &&
\le(1+\delta)\max\left\{\frac{5}{2}(1+\delta),\frac{2}{\delta}\right\}
\Biggl(\int_{\mathbb{R}^2}\left|\nabla\psi_{j,p}\right|^2(x,y)\,\D x\,\D y
\\ && \qquad  + \int_{\mathbb{R}^2}|xy|^p\psi_{j,p}^2(x,y)\,\D x\,\D y+(1+\delta)^2\Biggr) +2\,, 
 \end{eqnarray*}
where we have used the fact that $|xy|^p > |y|^{2p/(p+2)}$ holds on the domain of the second one of the four integrals, and a similar bound holds \emph{a fortiori} for the fourth one; the factor $(1+\delta)^2$ prevents from double counting the ``corner regions'' with $|x|,|y|\ge 1$ and $|y|\le(1+\delta)|x|^{-p/(p+2)}$. Choosing $\delta =\frac{-5+\sqrt{105}}{10}$ we get
 \begin{eqnarray*}
\lefteqn{\int_{\mathbb{R}^2}(x^2+y^2)^{\frac{p}{p+2}}\psi_{j,p}^2(x,y)\,\D x\,\D y \le
\alpha\Biggl(\int_{\mathbb{R}^2}\left|\nabla\psi_{j,p}\right|^2(x,y)\,\D x\,\D y} \\ && \hspace{9em} +
\int_{\mathbb{R}^2}|xy|^p\psi_{j,p}^2(x,y)\,\D x\,\D y\Biggr)+c\,,
 \end{eqnarray*}
where $c:=\alpha(1+\delta)^2+2=2\left(\frac{\alpha^2}{5}+1\right)$.
Since $\lambda_{j,p}$ is the eigenvalue corresponding to the eigenfunction $\psi_{j,p}$ the last relation implies
$$
\int_{\R^2}\left|\nabla\psi_{j,p}\right|^2\,\D x\,\D y+\int_{\R^2}|xy|^p\psi_{j,p}^2\,\D x\,\D y\le
\frac{1}{1-\alpha\lambda}(\lambda_{j,p}+c\lambda)\,,\quad j=1,2,\ldots\,. 
$$
We subtract a number $\varrho$ from both sides of the last equation and express the first integral through the Fourier-Plancherel image of $\psi_{j,p}$. Summing over the first $N$ eigenvalues we obtain
 \begin{eqnarray*}
\lefteqn{-\sum_{j=1}^N\int_{R^2}(\varrho-x^2-y^2)|\hat{\psi}_{j,p}|^2\,\D x\,\D y+
\sum_{j=1}^N\int_{R^2}|xy|^p\psi_{j,p}^2\,\D x\,\D y}
\\ &&
\hspace{2.5em} \le\frac{1}{1-\alpha\lambda}\sum_{j=1}^N (\lambda_{j,p}+c\lambda)-N\varrho\,,
 \end{eqnarray*}
and the inequality will certainly remain valid if we replace $\varrho-x^2-y^2$ by its positive part $[\varrho-x^2-y^2]_+$. We need the following auxiliary result:

\medskip

\noindent There is a constant $C'_p$ such that for any orthonormal system of real-valued function,
$\Phi=\{\varphi_j\}_{j=1}^N \subset L^2(\R^2),\,N=1,2,\ldots\,$, the inequality
$$
\int_{\R^2}\rho_\Phi^{p+1}\,\D x\,\D y\le
C'_p(\ln^pN+1)\sum_{j=1}^N \int_{\R^2}|\xi\eta|^p
|\hat{\varphi_j}|^2\,\D\xi\,\D\eta\,,
$$
holds true, where $\rho_\Phi:=\sum_{j=1}^N\varphi_j^2$.

\medskip

\noindent This claim was proved as Theorem~2 in \cite{Bar09} for $p=1$, its extension to any $p\ge 1$ is straightforward. Combining it with  H\"older inequality we obtain the following estimate for the system $\Phi=\{\hat{\psi}_{j,p}\}_{j=1}^N$,
 \begin{eqnarray*}
\hspace{-5em} \lefteqn{-\biggl(\int_{\R^2}[\varrho-x^2-y^2]_+^{\frac{p+1}{p}}\,\D x\,\D y\biggr)^{\frac{p}{p+1}}\biggl(\int_{\R^2}\rho_{\hat{\Phi}_p}^{p+1}\,\D x\,\D y\biggr)^{\frac{1}{p+1}}+\frac{C_p'}{1+\ln^p N}
\int_{\R^2}\rho_{\hat{\Phi}_p}^{p+1}\,\D x\,\D y} \\ && \qquad
\le\frac{1}{1-\alpha_p\lambda}\sum_{j=1}^N(\lambda_{j,p}+c\lambda)-N\varrho\,.
 \end{eqnarray*}
Consider the function
$f:\: f(z)=-\biggl(\int_{\R^2}{[\varrho-x^2-y^2]_+}^{(p+1)/p}\,
\D x\,\D y\biggr)^{p/(p+1)}z+\frac{C_p'}{1+\ln^p N}z^{p+1}.$
It is easy to verify that its minimum is attained
at the point $z_0\equiv z_0(p)$ where
$$
z_0=\frac{{(1+\ln^pN)}^{1/p}}{{(p+1)^{1/p}C_p'}^{1/p}}
\biggl(\int_{\R^2}{[\varrho-x^2-y^2]_+}^{(p+1)/p}
\D x\,\D y\biggr)^{1/p}
$$
and substituting the function value at the point $z_0$ we obtain
$$
-C_p''{(1+\ln^p N)}^{1/p}\int_{\R^2}{[\varrho-x^2-y^2]_+}
^{(p+1)/p}\,\D x\,\D y \le \frac{1}{1-\alpha_p\lambda}\sum_{j=1}^N(\lambda_{j,p}
+c\lambda)-N\varrho
$$
with the constant $C_p''=p (p+1)^{-(p+1)/p}(C_p')^{-1/p}$, which is further equivalent to
 \begin{equation} \label{est3}
C_p'''{(1+\ln^pN)}^{1/p}\varrho^{(2p+1)/p} \geq N\varrho-\frac{1}{1-\alpha\lambda}\sum_{j=1}^N(\lambda_{j,p}+c\lambda) 
 \end{equation}
with  the new constant given by $C_p''':=C_p''\int_{x^2+y^2\le1}(1-x^2-y^2)^{(p+1)/p}\,\D x\,\D y$.

In the final step we apply Legendre transformation \cite{ME82} to the function at the right-hand side of (\ref{est3}), $g(\varrho):= C_p'''{(1+\ln^pN)}^{1/p} \varrho^{(2p+1)/p}$. By definition we have
$$
\widetilde{g}(N)=\sup_{\varrho\geq0}\left(N\varrho-C_p'''\,\varrho^{(2p+1)/p}
(1+\ln^pN)^{1/p}\right)
$$
and denoting the expression in the bracket as $h(\varrho)$ we check easily that reaches its maximum at the point  $\varrho_{max}=\left(p/((2p+1)C_p'''\right)^{p/(p+1)} N^{p/(p+1)}
(1+\ln^pN)^{/1/(p+1)}$ and its value there equals
$$
\widetilde{g}(N)=h(\varrho_{max})=C_p\frac{N^{(2p+1)/(p+1)}}
{(1+\ln^pN)^{1/(p+1)}}
$$
with the constant
$$
C_p:=\left(\frac{p}{(2p+1)C_p'''}\right)^{p/(p+1)}\, \frac{p+1}{2p+1}\,.
$$
Then (\ref{est3}) implies the bound
$$
\frac{1}{1-\alpha_p\lambda}\sum_{j=1}^N(\lambda_{j,p}+c_p\lambda)\ge
h\left(\varrho_{max}\right)=\widetilde{g}(N).
$$
which is equivalent to the claim of the theorem. \end{proof}

\begin{remark}
{\rm While our main interest concerns sums of eigenvalues, we note that one can use the above result also to derive bounds on more general Lieb-Thirring-type expressions. Indeed, it follows that for large enough natural $N$ the spectrum $\lambda_{1,p}\le\lambda_{2,p}\le\ldots$ of operator $L_p(\lambda)$ satisfies
$$
\sum_{j=K}^{K+N}\lambda_{j,p}\geq \frac12\, C_p(1-\alpha\lambda)\,
\frac{N^{(2p+1)/(p+1)}}{(1+\ln^pN)^{1/(p+1)}},\quad K=1,2,\ldots\,.
$$
Using this inequality for $K=N$ we infer, in particular, that
$$
\lambda_{2N,p}\geq \frac12\, C_p(1-\alpha\lambda)\,
\frac{N^{p/(p+1)}}{(1+\ln^pN)^{1/(p+1)}}\,.
$$
Consequently for any positive number $\sigma$ we get
$$
\sum_{j=1}^{3N}\lambda_{j,p}^\sigma\geq\sum_{j=2N}^{3N}\lambda_{j,p}^\sigma
\geq N\lambda_{2N,p}^\sigma\geq \frac{C_p^\sigma(1-\alpha\lambda)^\sigma}{2^\sigma}
\frac{N^{(p(\sigma+1)+1)/(p+1)}}{(1+\ln^pN)^{\sigma/(p+1)}}\,,
$$
and as a result, the inequality
$$
\sum_{j=1}^N\lambda_{j,p}^\sigma\geq\sum_{j=1}^{3[\frac{N}{3}]}\lambda_{j,p}^\sigma
\geq \widetilde{C}_p(1-\alpha\lambda)^\sigma
\frac{N^{(p(\sigma+1)+1)/(p+1)}}{(1+\ln^pN)^{\sigma/(p+1)}}
$$
is valid with some positive constant  $\widetilde{C}_p$. }
\end{remark}

\subsection{Upper bounds}
\label{ss: upper}

Next we will perform a complementary task and establish an upper bound for spectral sums of $L_p(\lambda),\,p\ge1$, with any subcritical $\lambda$. It will show, in particular, that in the case $\lambda=0$ the asymptotics given by Theorem~\ref{th:lowerbound} is exact up the value of the constant.

 \begin{theorem} \label{th:upperbound}
 To any $p\ge 1$ there is a constant $\widetilde C_p$ such that
$$
\sum_{j=1}^N\lambda_{j,p}\leq \widetilde C_p\frac{N^{(2p+1)/(p+1)}}
{(1+\ln^pN)^{1/(p+1)}}\,, \qquad N=1,2,\ldots 
$$
 holds for any $0\le\lambda<\gamma_p$.
 \end{theorem}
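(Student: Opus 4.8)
The plan is to prove the bound in two stages: first reduce it to the case $\lambda=0$, and then estimate the eigenvalue sums of $L_p(0)=-\Delta+|xy|^p$ by a variational construction adapted to the hyperbolic valleys. The reduction is immediate. Since $\lambda\ge0$ and $(x^2+y^2)^{p/(p+2)}\ge0$, the associated quadratic forms satisfy $q_\lambda\le q_0$ on $C_0^\infty(\R^2)$, which by the Faris--Lavine theorem is a form core for both operators; hence $L_p(\lambda)\le L_p(0)$ and, by the minimax principle, $\lambda_{j,p}(\lambda)\le\lambda_{j,p}(0)$ for every $j$, so $\sum_{j=1}^N\lambda_{j,p}(\lambda)\le\sum_{j=1}^N\lambda_{j,p}(0)$ for all $N$. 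This is precisely why the same constant $\widetilde C_p$ serves every subcritical $\lambda$. It remains to treat $L_p(0)$, and for that it suffices, by minimax, to exhibit for each $N$ an $N$-dimensional subspace $\mathcal V_N$ of the form domain of $L_p(0)$ such that $\sum_{j=1}^N\lambda_{j,p}(0)\le\sum_{k=1}^N q_0[\phi_k]$ is small for one (hence any) orthonormal basis $\{\phi_k\}_{k=1}^N$ of $\mathcal V_N$; equivalently one may use Dirichlet bracketing on the decomposition below, which leads to the same computation.

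By the mirror symmetries of $|xy|^p$ it is enough to build trial functions in the valley along the positive $x$-semiaxis and multiply the count by four, treating the bounded region $|x|,|y|\le1$ separately. In that valley we cut $\{x\ge1\}$ into the dyadic cells $Q_\ell=[2^\ell,2^{\ell+1}]\times\R_y$, $\ell=0,1,2,\dots$, and on $Q_\ell$ we take product trial functions $\phi_{\ell,k,n}(x,y)=u_{\ell,k}(x)\,v_{\ell,n}(y)$, where $u_{\ell,k}$, $k=1,2,\dots$, are the normalized Dirichlet eigenfunctions of $-\D^2/\D x^2$ on $[2^\ell,2^{\ell+1}]$ extended by zero (so that different cells give orthogonal functions), and $v_{\ell,n}$, $n=1,2,\dots$, are the normalized eigenfunctions of the \emph{frozen} transverse operator $-\D^2/\D y^2+2^{(\ell+1)p}|y|^p$ on $L^2(\R)$. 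Since $|x|\le2^{\ell+1}$ on $Q_\ell$, the true potential is dominated by the frozen one, and the product form makes the variables separate with \emph{no cross term}; writing $\Gamma_n$ for the eigenvalues of $-u''+|t|^pu$ on $L^2(\R)$ — so that $\Gamma_1=\gamma_p$ and, by one-dimensional Weyl asymptotics, $\Gamma_n\asymp n^{2p/(p+2)}$ — a direct computation and the obvious scaling give
\[
q_0[\phi_{\ell,k,n}]\ \le\ \bigl(\pi k\,2^{-\ell}\bigr)^2+\Gamma_n\,2^{2(\ell+1)p/(p+2)}.
\]
For fixed $\ell$ these functions are orthonormal (the $u_{\ell,k}$ are orthonormal in $x$, the $v_{\ell,n}$ in $y$), and for the indices that matter they are supported in $|y|\le1$, hence orthogonal to the trial functions of the other three valleys; the central box contributes via the standard Dirichlet-box trial functions.

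It remains to fix the cutoff. Given $E>0$, let $\mathcal V_N$ be spanned by all $\phi_{\ell,k,n}$ (together with the central-box functions) for which the right-hand side above is $\le E$, and let $N=N(E)$ be their number. A cell $Q_\ell$ contributes only if $2^\ell\lesssim E^{(p+2)/(2p)}$, and an elementary count of the lattice points $(k,n)$ under $\bigl(\pi k\,2^{-\ell}\bigr)^2+\Gamma_n2^{2(\ell+1)p/(p+2)}\le E$ shows that every ``deep'' cell, $E^{1/p}\lesssim2^\ell\lesssim E^{(p+2)/(2p)}$, supplies of order $E^{(p+1)/p}$ of them, the $2^\ell$-dependence cancelling, whereas the remaining shallow cells and the central box together add only $O\bigl(E^{(p+1)/p}\bigr)$. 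Since the number of deep cells is of order $\ln E$, we obtain $N(E)\gtrsim C_p'\,E^{(p+1)/p}\ln E$ for $E\ge E_0(p)$; choosing $E\asymp N^{p/(p+1)}(\ln N)^{-p/(p+1)}$, so that $\ln E\asymp\ln N$, makes $N(E)\ge N$. Every trial function in $\mathcal V_N$ then has Rayleigh quotient $\le E$, whence
\[
\sum_{j=1}^N\lambda_{j,p}(0)\ \le\ \sum_{k=1}^N q_0[\phi_k]\ \le\ N\,E\ \asymp\ N^{(2p+1)/(p+1)}(\ln N)^{-p/(p+1)}=N^{(2p+1)/(p+1)}(\ln^pN)^{-1/(p+1)},
\]
which is dominated by the asserted expression; the finitely many small $N$ for which the asymptotics has not yet set in are absorbed into $\widetilde C_p$, and the harmless ``$1+$'' in $(1+\ln^pN)$ keeps the right-hand side finite at $N=1$.

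The one genuinely delicate point is the lattice-point count of the third paragraph: one must verify that each deep dyadic cell really contributes at order $E^{(p+1)/p}$ \emph{uniformly in} $\ell$, so that the logarithmic number of such cells survives into the leading term — this is the combinatorial origin of the factor $(1+\ln^pN)^{1/(p+1)}$, and it mirrors the logarithmic divergence of the formal Weyl phase-space volume for $|xy|^p$. One must also check that the frozen-potential overestimate, the two-dimensional ``bulk'' contribution $O(E)$, and the transverse modes that would protrude beyond unit width near the mouths of the valleys are all of strictly lower order than $E^{(p+1)/p}\ln E$, so that neither the power of $N$ nor the power of $\ln N$ is affected; these are routine but somewhat tedious estimates.
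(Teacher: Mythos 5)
Your proposal is correct in substance but proceeds by a genuinely different route than the paper. The paper majorizes $L_p(\lambda)$ by $\hat H_p=-\Delta+|xy|^p+|x|^p+|y|^p+1$, whose sublevel sets have finite measure, verifies the hypotheses of Rozenblum's theorem on Weyl asymptotics, computes $\Phi(\lambda,Q)\asymp\lambda^{(p+1)/p}\ln\lambda$, and inverts the resulting lower bound on the counting function to get $\beta_N\lesssim(N/\ln N)^{p/(p+1)}$ and hence $\sum_{j\le N}\beta_j\le\lambda N$. You instead lower-bound the counting function of $L_p(0)$ directly by an explicit orthonormal trial family (dyadic longitudinal cells, Dirichlet modes in $x$ times eigenfunctions of the frozen transverse oscillator) and a lattice-point count giving $\asymp E^{(p+1)/p}$ states per admissible cell and $\asymp\ln E$ cells; this is more elementary and self-contained, at the cost of bookkeeping, whereas the paper outsources the counting to a general theorem but must regularize the potential precisely because bare $|xy|^p$ has infinite phase-space volume. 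Two repairs to your write-up: (i) the transverse eigenfunctions $v_{\ell,n}$ are not compactly supported, so the claimed support in $|y|\le1$ and the resulting cross-valley orthogonality are false as stated; this is harmless, since a single valley already yields the count up to a constant, or one can truncate with Dirichlet conditions at $|y|=1$. (ii) The checks in your last paragraph are unnecessary: you only need a lower bound on the number of orthonormal trial functions with Rayleigh quotient $\le E$, so overestimating the potential or discarding states cannot invalidate the bound (likewise, the shallow cells in fact also contribute $\asymp E^{(p+1)/p}$ each, which only helps). Your reduction $L_p(\lambda)\le L_p(0)$ correctly explains why $\widetilde C_p$ is uniform over $0\le\lambda<\gamma_p$.
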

\begin{proof}[Proof]
Consider the operator $\hat H_p=-\Delta+Q,\quad\textrm{where}\quad Q(x,y)=|xy|^p+|x|^p+|y|^p+1$, in $L^2(\R^2)$. Its spectrum is discrete by Theorem~\ref{thm:subcrit} and minimax principle since $H_p\ge L_p$, and it is obviously sufficient to establish a bound of the above type for the eigenvalues $0\le\beta_{1,p}\le\beta_{2,p}\le\ldots$ of the estimating operator $\hat H_p$.

We shall employ Weyl asymptotics for the number of eigenvalues of below bounded differential operators in a version proved by G.~Rozenblum \cite{Ro74}. Let $T=-\Delta+V$ in $\R^m$, where the potential $V(x)\geq1$ and tends to infinity as $|x|\to\infty$. We denote by $E(\lambda,V)$ the set $\{x\in \R^m:\: V(x)<\lambda\}$ and put
$$
\sigma(\lambda,V)=\mathrm{mes\,} E(\lambda,V)\,.
$$\
For any unit cube $D\subset \R^m$ we denote the mean value of the function $V$ in $D$ by $V_D$. Furthermore, given a function $f\in L^1(D)$ and $t\le\sqrt{m}$ we define its $L_1$-modulus of continuity by the formula
$$
\omega_1(f,t,D):=\sup_{|z|<t}\int_{x\in D,\,x+z\in D}
|f(x+z)-f(x)|\,\D x\,.
$$
Then we have the following result \cite{Ro74}:

\medskip

\noindent Suppose that the potential $V$ satisfies the following
conditions:
\begin{enumerate}
\item There exists a constant $c$ such that $\sigma(2\lambda,V)\leq c\sigma(\lambda,V)$ holds all $\lambda$ large enough.
\item $V(y)\leq cV(x)$ holds if as $|x-y|\leq1$.
\item There is a continuous and monotonous function
$\eta:\,t\in[0,\sqrt{m}]\to\R_+$ with $\eta(0)=0$ and a number
$\beta\in[0,\frac12)$ such that for any unit square $D$ we have
$$
\omega_1(V,t,D)\leq\eta(t)\,t^{2\beta}\,V_D^{1+\beta}\,.
$$
\end{enumerate}
Under theses assumption the asymptotic formula
 $$
N(\lambda,V)\sim\Gamma_m\Phi(\lambda,V)
 $$
holds for the operator $T=-\triangle+V$, where $N(\lambda,V)$ is the number of eigenvalues of $T$ smaller than $\lambda$, unit-ball volume $ \Gamma_m=(2\sqrt{\pi})^{-m} \biggl(\Gamma\biggl( \frac{m}{2}+1\biggr) \biggr)^{-1}$, and
 $$
\Phi(\lambda,V)=\int_{\R^2}(\lambda-V)_+^{m/2}\,\D x\,\D y\,,
 $$
where as usual $f(x)_+:=\max(f(x),0)$. Let us check that the assumption are satisfied for the operator $\hat H_p$. As for the first one, by definition we have
 \begin{eqnarray*}
\hspace{-2em} \lefteqn{\sigma(\lambda,Q) = \int_{|x|^p|y|^p+|x|^p+|y|^p\leq\lambda-1}\,\D x\,\D y =4\int_0^{(\lambda-1)^{1/p}}\int_0^{\frac{(\lambda-x^p-1)^{1/p}}
{(x^p+1)^{1/p}}}\,\D y\,\D x} \\ &&
\hspace{1.2em} =4\int_0^{(\lambda-1)^{1/p}}\frac{(\lambda-x^p-1)^{1/p}}
{(x^p+1)^{1/p}}\,\D x\,, 
 \end{eqnarray*}
hence
 $$
\frac{1}{2p}\lambda^{1/p}\ln\lambda\leq\sigma(\lambda,Q)\leq
\frac{8}{p}\lambda^{1/p}\ln\lambda
 $$
holds for large enough $\lambda$ verifying the first assumption. Next we have
 $$
|Q(x,y)-Q(x_1,y_1)|\leq12\cdot16^p\,p((x-x_1)^2+(y-y_1)^2)^{1/2}\,Q(x,y) 
 $$
if $(x-x_1)^2+(y-y_1)^2\leq4$ which gives the second assumption with $c=12\cdot16^p\,p+1$ and the third one with $\beta=0$ and $\eta(t)= 12\cdot16^p\,p\,(12\cdot16^p\,p+1)t$ for $t\in[0,\sqrt{2}]$. Hence the eigenvalues of $\hat H_p$ satisfy
$$
N(\lambda,Q)\sim\Gamma_2\Phi(\lambda,Q)\,, 
$$
where the second factor is equal to
 \begin{eqnarray*}
\hspace{-1em} \lefteqn{
\Phi(\lambda,Q)=\int_{\R^2}(\lambda-|xy|^p-|x|^p-|y|^p-1)_+\,\D x\,\D y}
\\ &&
=4\int_0^{(\lambda-1)^{1/p}}\int_0^{\frac{(\lambda-1-x^p)^{1/p}}
{(x^p+1)^{1/p}}}(\lambda-x^py^p-x^p-y^p-1)\,\D y\,\D x
\\ && =4\int_0^{(\lambda-1)^{1/p}}\Biggl((\lambda-1)\frac{(\lambda-x^p-1)
^{1/p}}{(x^p+1)^{1/p}}-\frac{x^p}{p+1}\frac{(\lambda-x^p-1)
^{(p+1)/p}}{(x^p+1)^{(p+1)/p}}
\\ &&
\qquad -x^p\frac{(\lambda-x^p-1)^{1/p}}{(x^p+1)^{1/p}}-
\frac{1}{p+1}\frac{(\lambda-x^p-1)^{(p+1)/p}}{(x^p+1)^{(p+1)/p}}
\Biggr)\,\D x
\\ &&
=\frac{4p}{p+1}\int_0^{(\lambda-1)^{1/p}}\frac{(\lambda-x^p-1)
^{(p+1)/p}}{(x^p+1)^{1/p}}\,\D x\,, 
 \end{eqnarray*}
which can be estimated as follows
$$
\frac{1}{4(p+1)}\lambda^{(p+1)/p}\ln\lambda\leq \Phi(\lambda,Q)
\leq\frac{8}{p+1}\lambda^{(p+1)/p}\ln\lambda 
$$
implying
$$
N(\lambda,Q)\geq \tilde\Gamma_2\,\lambda^{(p+1)/p}\ln\lambda 
$$
for large enough $\lambda$, where $\tilde\Gamma_2 =\frac{\Gamma_2}{8(p+1)}$. We write $\lambda$ in the form $\lambda= C(\lambda)\biggl(\frac{N(\lambda,Q)}
{\ln N(\lambda,Q)}\biggr)^{p/(p+1)}$ and substitute into the last inequality obtaining
 \begin{eqnarray*}
\hspace{-6em} \lefteqn{
N(\lambda,Q)\geq \tilde\Gamma_2 C(\lambda)^{(p+1)/p} \frac{N(\lambda,Q)}
{\ln N(\lambda,Q)} \left( \ln C(\lambda) + \frac{p}{p+1} \ln N(\lambda,Q) - \frac{p}{p+1} \ln \ln N(\lambda,Q) \right)} \\ &&
\hspace{-2.5em} \ge \Gamma_2 C(\lambda)^{(p+1)/p} \frac{N(\lambda,Q)}
{\ln N(\lambda,Q)} \left( \ln C(\lambda) + \frac{p}{2(p+1)} \ln N(\lambda,Q) \right)
 \end{eqnarray*}
for $\lambda$ large enough. If $C(\lambda)>1$ we can discard the first term in the last bracket obtaining
$$
N(\lambda,Q)\geq \frac{\tilde\Gamma_2}{2(p+1)}\, C(\lambda)^{(p+1)/p}\, N(\lambda,Q)\,,
$$
hence
$$
C(\lambda)\leq\left(\frac{2(p+1)}{p\tilde\Gamma_2}\right)^{\frac{p}{p+1}}+1\,,
$$
which implies
$$
\lambda\leq\left( \left(\frac{2(p+1)}{p\tilde\Gamma_2}\right)^{\frac{p}{p+1}} +1 \right)
\left(\frac{N(\lambda,Q)}{\ln N(\lambda,Q)}\right)^{\frac{p}{p+1}} 
$$
and, \emph{mutatis mutandis}, the following upper bound on the spectrum of the operator $\hat H_p$,
$$
\sum_{j=1}^{N(\lambda,Q)}\beta_{j,p}\leq\lambda N(\lambda,Q)
\leq\widetilde{C_p}\frac{(N(\lambda,Q))^{(2p+1)/(p+1)}}
{(\ln N(\lambda,Q)+1)^{p/(p+1)}}
$$
with a constant $\widetilde{C_p}$ depending on $p$ only; this yields the claim of the theorem.
\end{proof}

\section{Horn-shaped regions}
\label{s: horn-shape}

Since our bounds are valid for any $p\ge 1$ it is natural to ask about the limit $p\to\infty$ which would correspond to the particle confined in a region with four hyperbolic ``horns'', $D=\{(x,y)\in\R^2:\: |xy|\le1\}$, described by the Schr\"odinger operator $H_D(\lambda):\: H_D(\lambda)\psi =-\Delta\psi-\lambda(x^2+y^2)\psi$ with a nonnegative parameter  $\lambda$ and Dirichlet condition on the boundary $\partial D$. One can prove directly an analogue to Theorem~\ref{th:lowerbound} for this case under a weaker restriction on the coupling constant $\lambda$.

 \begin{theorem} \label{th:dirichlet}
The spectrum of $H_D(\lambda)$ is discrete for any $\lambda\in [0,1)$ and the spectral estimate
$$
\sum_{j=1}^N\lambda_j\geq\,C(1-\lambda)\frac{N^2}{1+\ln\,N}\,,\qquad N=1,2,\ldots\,, 
$$
holds true with a positive constant $C$.
 \end{theorem}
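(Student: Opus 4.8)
The plan is to mimic the structure of the proof of Theorem~\ref{th:lowerbound}, exploiting the fact that in the horn-shaped region the confinement to the valleys is replaced by the hard Dirichlet wall at $|xy|=1$, which makes the geometric estimates cleaner and eliminates the logarithmic factor of the form $\ln^p N$ in favour of a single $\ln N$. Let $\{\psi_j\}_{j=1}^\infty$ be the orthonormal eigenfunctions of $H_D(\lambda)$ with eigenvalues $\lambda_1\le\lambda_2\le\cdots$, extended by zero outside $D$; discreteness follows already from the Dirichlet bracketing argument of Section~\ref{s: discreteness} applied to concentric-circle partitions, since the cross-sections of the horns shrink. The first step is to establish, by a one-dimensional Newton--Leibniz and Cauchy--Schwarz argument along the $x$-direction for fixed $y$ (and symmetrically in $y$), an inequality of the type
$$
\int_{\R^2}(x^2+y^2)\,\psi_j^2\,\D x\,\D y \le A\int_{\R^2}|\nabla\psi_j|^2\,\D x\,\D y + B\,,
$$
with explicit constants $A,B$. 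Here the key simplification is that on the horn $|y|\ge 1$ we have $|x|\le |y|^{-1}$, so $(x^2+y^2)\le 2y^2$ on that part and the wall at $x=|y|^{-1}$ gives $\psi_j=0$ there, so the boundary term from the integration by parts vanishes identically --- no auxiliary level $\theta(\varepsilon)$ is needed. For the bounded ``corner'' region $|x|,|y|\le 1$ one simply bounds $(x^2+y^2)\le 2$ and uses $\|\psi_j\|=1$, contributing to the constant $B$.

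The second step is to convert the resulting bound into a spectral estimate. Since $-\Delta\psi_j - \lambda(x^2+y^2)\psi_j = \lambda_j\psi_j$, testing against $\psi_j$ gives $\int|\nabla\psi_j|^2 = \lambda_j + \lambda\int(x^2+y^2)\psi_j^2$, and combining with the inequality above (for $\lambda<1/A$, which will hold once $A$ is computed --- and the claim is that $A=1$ works here, matching the stated range $\lambda\in[0,1)$) yields
$$
\int_{\R^2}|\nabla\psi_j|^2\,\D x\,\D y \le \frac{1}{1-\lambda}\,(\lambda_j + B\lambda)\,.
$$
Passing to the Fourier side, one writes $\int|\nabla\psi_j|^2 = \int(\xi^2+\eta^2)|\hat\psi_j|^2$, subtracts a parameter $\varrho>0$, takes positive parts, and sums over $j=1,\dots,N$ exactly as in the proof of Theorem~\ref{th:lowerbound}. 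The role of the kinetic term $\int|xy|^p\psi_j^2$ in that proof is here played by the Dirichlet constraint itself: the functions $\hat\psi_j$ are the Fourier images of an orthonormal system supported in $D$, and one needs the auxiliary inequality (the $p=1$-type estimate, Theorem~2 of \cite{Bar09}, or rather its hard-wall analogue) bounding $\int\rho_\Phi^{2}$ by $(\ln N+1)$ times a Dirichlet energy, where $\Phi=\{\psi_j\}$ and $\rho_\Phi=\sum\psi_j^2$. This is the horn-shaped counterpart of the claim quoted in Section~\ref{ss: lower}, and because the $p=1$ version is exactly what appears there, the $\ln N$ (rather than $\ln^p N$) is natural.

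The remaining step is the Legendre-transform optimization: after the Hölder inequality one arrives at an inequality of the form $C'''(1+\ln N)\,\varrho^{2} \ge N\varrho - \frac{1}{1-\lambda}\sum_{j=1}^N(\lambda_j + B\lambda)$ (the exponent of $\varrho$ being $(2p+1)/p\big|_{p=1}=3$ on one side and $2$ after the optimization over the auxiliary variable $z$, as in the proof of Theorem~\ref{th:lowerbound}), and taking the Legendre transform of $\varrho\mapsto C'''(1+\ln N)\varrho^2$ gives $\sup_{\varrho\ge0}(N\varrho - C'''(1+\ln N)\varrho^2) = \frac{N^2}{4C'''(1+\ln N)}$, whence
$$
\frac{1}{1-\lambda}\sum_{j=1}^N(\lambda_j+B\lambda) \ge \frac{N^2}{4C'''(1+\ln N)}\,,
$$
and since the subtracted term $\frac{B\lambda}{1-\lambda}N$ is linear in $N$ it is absorbed, for $N$ large, into a fraction of the leading $N^2/(1+\ln N)$ term, giving the claim with a possibly smaller constant $C$ (and then adjusting $C$ to cover small $N$). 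The main obstacle I expect is verifying the geometric inequality $\int(x^2+y^2)\psi_j^2 \le \int|\nabla\psi_j|^2 + B$ with the sharp constant $1$ in front of the Dirichlet energy, so that the full range $\lambda\in[0,1)$ is attained rather than some smaller $\lambda<1/A$ with $A>1$; getting this likely requires handling the four horns and the central square with just enough care that no factor worse than $1$ is lost --- in particular using $(x^2+y^2)\le 2y^2$ together with the one-dimensional Poincaré-type inequality $\int_0^{1/|y|}\psi^2\,\D x \le \frac{1}{|y|^2}\cdot(\text{const})\int_0^{1/|y|}\psi_x^2\,\D x$ on the interval of length $|y|^{-1}$ with a Dirichlet endpoint, where the constant must be tracked so that the product $2\cdot(\text{const})$ does not exceed $1$, or else a more clever weighting is needed.
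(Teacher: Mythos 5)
Your first (geometric) step is sound, and in fact your worry about losing a factor worse than $1$ can be dispensed with: instead of bounding $x^2+y^2\le 2y^2$ and invoking a sharp one-dimensional Poincar\'e constant, one simply estimates the two terms separately, pairing $y^2u^2$ with $(\partial_xu)^2$ along the line $\{y=\mathrm{const}\}$ (where $u$ vanishes at $x=1/y$) and $x^2u^2$ with $(\partial_yu)^2$, on each of the four quadrant pieces of $D$; the crude Newton--Leibniz/Cauchy bound already gives $\int_D(x^2+y^2)u^2\le\int_D|\nabla u|^2$ with constant exactly $1$ and no additive term, hence $H_D(\lambda)\ge-(1-\lambda)\Delta_D$ for all $\lambda\in[0,1)$ (your route with $2\cdot(2/\pi)^2=8/\pi^2<1$ would also suffice). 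The genuine gap lies in your second half. You propose to rerun the machinery of Theorem~\ref{th:lowerbound}, but in that proof the quantity that controls $\int\rho_{\hat\Phi}^{\,p+1}$ through the auxiliary result of \cite{Bar09} is the potential energy $\sum_j\int|xy|^p\psi_j^2$, which sits on the left-hand side of the basic inequality and can therefore be traded against $\|\rho_{\hat\Phi}\|_{p+1}$ in the H\"older/optimization step. For $H_D(\lambda)$ there is no such term: the only information is $\mathrm{supp}\,\psi_j\subset D$, and the natural substitute --- the Bessel-type bound $\rho_{\hat\Phi}(\xi)\le(2\pi)^{-2}\,\mathrm{mes}\,D$, which is what makes Berezin--Li--Yau arguments work for domains --- fails here because $\mathrm{mes}\,D=\infty$. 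The ``hard-wall analogue'' of Theorem~2 of \cite{Bar09} that you invoke (an $L^2$ bound on the Fourier-side density of an orthonormal system supported in $D$ with a $(1+\ln N)$ factor) is neither stated precisely nor proved, and it is not a routine modification; results of that kind for infinite-volume cusped domains are nontrivial (cf.\ \cite{GW11}). As written, the H\"older/Legendre step therefore does not close.

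The paper avoids this issue altogether: having established $H_D(\lambda)\ge-(1-\lambda)\Delta_D$, discreteness follows from Simon's result \cite{Si83} for $-\Delta_D$ on the horn region, and the eigenvalue-sum bound is obtained not by redoing the Riesz-mean machinery but by quoting the known Weyl-type asymptotics $N(\lambda)\sim\frac{1}{\pi}\lambda\ln\lambda$ for $-\Delta_D$ \cite{JMS92}, inverting it to $\beta_j\sim\pi j/\ln j$, and summing via $\sum_{j=1}^N\beta_j\ge\bigl[\tfrac N2\bigr]\beta_{[N/2]}\ge cN^2/\ln N$ for large $N$. So either supply and prove the missing density lemma for orthonormal systems supported in $D$ (the hard part of your route), or replace your second step by the reduction to $-\Delta_D$ plus the known asymptotics; your discreteness-by-bracketing remark would likewise need to be carried out, whereas it is immediate from the reduction just mentioned.
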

\begin{proof}[Proof:]
Let us check that for any function $u\in H^1$ satisfying the condition $\left.u\right|_{\partial D}=0$ we have
 \begin{equation} \label{direst}
\int_D(x^2+y^2)u^2(x,y)\,\D x\,\D y\le
\int_D\left|\left(\nabla\,u\right)(x,y)\right|^2\,\D x\,\D y\,. 
 \end{equation}
Using Newton-Leibnitz theorem and Cauchy inequality we get
 \begin{eqnarray*}
\hspace{-6em} \lefteqn{
\int_{D_{++}} y^2u^2(x,y)\,\D x\,\D y = \int_0^\infty\int_0^{1/y} y^2u^2(x,y)\,\D x\,\D y}
\\ && \hspace{-5.5em}
=\int_0^\infty\int_0^{1/y}y^2\left(-\int_x^{1/y}
\frac{\partial u}{\partial t}(t,y)\,\D t\right)^2\,\D x\,\D y
\le\int_0^\infty\int_0^{1/y}y\int_x^{1/y}\left(\frac{\partial u}{\partial t} \right)^2(t,y)\,\D t\,\D x\,\D y
\\ && \hspace{-5.5em}
\le\int_0^\infty\int_0^{1/y}y\int_0^{1/y}\left(\frac{\partial u}{\partial t}
\right)^2(t,y)\,\D t\,\D x\,\D y\le\int_0^\infty\int_0^{1/y}\left(\frac{\partial u}{\partial t}
\right)^2(t,y)\,\D t\,\D y\,, 
 \end{eqnarray*}
where $D_{++} = \{(x,y)\in D:\: x,y\ge 0\}$. Similarly one can prove that
$$
\int_{D_{++}}x^2u^2(x,y)\,\D x\,\D y
\le\int_0^\infty\int_0^{1/x}
\left(\frac{\partial u}{\partial t}\right)^2(x,t)\,\D t\,\D x\,; 
$$
the remaining three cases with $x$ or $y$ taking negative values are treated in a similar way. This proves the inequality (\ref{direst}) which in turn implies
$$
H_D(\lambda)\ge-(1-\lambda)\Delta_D\,,  
$$
where $\Delta_D$ is the Dirichlet Laplacian on the region $D$. Combining the classical result of B.~Simon \cite{Si83} with the minimax principle we find that the spectrum of $H_D(\lambda)$ is purely discrete for any $\lambda<1$. To prove the sought spectral estimate it is sufficient to check that
 \begin{equation} \label{dirbound}
\sum_{j=1}^N\beta_j\ge\,C\frac{N^2}{1+\ln\,N}\,,\qquad N=1,2,\ldots\,,  
 \end{equation}
where $\beta_j,\,j=1,2,\ldots\,$, are the eigenvalues of  $\Delta_D$ arranged in the ascending order. The asymptotic eigenvalue distribution of Dirichlet Laplacian $\Delta_D$ for the region in question is well known \cite{JMS92},
$$
N(\lambda)\sim\frac{1}{\pi}\lambda\ln\lambda\,,
$$
By means of the known inverse asymptotic formula \cite[Sec.~9]{RSS89} we get for the spectrum of $-\Delta_D$ the expression
 $$
\beta_j\sim\frac{\pi\,j}{\ln\,j} 
 $$
as $j\to\infty$, and this in turn yields
$$
\sum_{j=1}^N\beta_j
\ge\sum_{j=\left[\frac{N}{2}\right]}^{2\left[\frac{N}{2}\right]}\beta_j
\ge\beta_{\left[\frac{N}{2}\right]}\,\left[\frac{N}{2}\right]
\ge\frac{\pi}{2}\left[\frac{N}{2}\right]^2\frac{1}{\ln\left[\frac{N}{2}\right]}
\ge\frac{\pi}{32}\frac{N^2}{\ln\,N}\,,
$$
for all sufficiently large $N$ which proves (\ref{dirbound}) with some constant $C$, and by that the spectral estimate for the operator $H_D(\lambda)$.
\end{proof}

\begin{remark}
{\rm The negative term in the Dirichlet case equals $-\lambda r^2$ where $r=(x^2+y^2)^{1/2}$ so the force associated with this potential looks like centrifugal one being proportional to the radius. This brings to mind a tempting picture of a trap which can release particles if it rotates fast enough. A proper description of such a system, however, requires to add to $-\Delta_D$ the operator $\omega\cdot L_\mathrm{z}$ where $\omega$ is the angular velocity, and we will not pursue this idea further in this paper.}
\end{remark}

\section*{Acknowledgments}
We are obliged to Milo\v{s} Tater for providing Fig.~\ref{gammap}. The research was supported by the Czech Ministry of Education,
Youth and Sports, and Czech Science Foundation within the projects LC06002 and P203/11/0701.

\section*{References}

\end{document}